\newcommand{\ket}[1]{|#1\rangle}
\newtheorem{thm}{Proposition}
\begin{document}

\title{Joining and splitting the quantum states of photons}

\author{Elsa Passaro}
\affiliation{ICFO-Institut de Ciencies Fotoniques, Castelldefels (Barcelona), Spain}
\author{Chiara Vitelli}
\affiliation{Dipartimento di Fisica, Sapienza Universit\`{a} di Roma, Roma, Italy}
\affiliation{Center of Life NanoScience @ La Sapienza, Istituto Italiano di Tecnologia, Roma, Italy}
\author{Nicol\`{o} Spagnolo}
\affiliation{Dipartimento di Fisica, Sapienza Universit\`{a} di Roma, Roma, Italy}
\author{Fabio Sciarrino}
\affiliation{Dipartimento di Fisica, Sapienza Universit\`{a} di Roma, Roma, Italy}
\author{Enrico Santamato}
\affiliation{Dipartimento di Fisica, Universit\`{a} di Napoli Federico II, Napoli, Italy}
\author{Lorenzo Marrucci}
\email{lorenzo.marrucci@unina.it}
\affiliation{Dipartimento di Fisica, Universit\`{a} di Napoli Federico II, Napoli, Italy}
\affiliation{CNR-SPIN, Complesso Universitario di Monte S. Angelo, Napoli, Italy}

\date{\today}

\begin{abstract}
A photonic process named as \emph{quantum state joining} has been recently experimentally demonstrated [C. Vitelli \textit{et al.}, Nature Photon. \textbf{7}, 521 (2013)] that corresponds to the transfer of the internal two-dimensional quantum states of two input photons, i.e., two photonic qubits, into the four-dimensional quantum state of a single photon, i.e., a photonic ququart. A scheme for the inverse process, namely \emph{quantum state splitting}, has also been theoretically proposed. Both processes can be iterated in a cascaded layout, to obtain the joining and/or splitting of more than two qubits, thus leading to a general scheme for varying the number of photons in the system while preserving its total quantum state, or quantum information content. Here, we revisit these processes from a theoretical point of view. After casting the theory of the joining and splitting processes in the more general photon occupation number notation, we introduce some modified schemes that are in principle unitary (not considering the implementation of the CNOT gates) and do not require projection and feed-forward steps. This can be particularly important in the quantum state splitting case, to obtain a scheme that does not rely on postselection. Moreover, we formally prove that the quantum joining of two photon states with linear optics requires the use of at least one ancilla photon. This is somewhat unexpected, given that the demonstrated joining scheme involves the sequential application of two CNOT quantum gates, for which a linear optical scheme with just two photons and postselection is known to exist. Finally we explore the relationship between the joining scheme and the generation of clusters of multi-particle entangled states involving more than one qubit per particle. 
\end{abstract}

\maketitle

\section{Introduction}
The emerging technology of quantum information promises a great enhancement of the computational power at our disposal, as well as the perfectly secure transmission of information \cite{bennett00,sergienko,ladd10}. To turn this vision into a reality, one of the greatest challenges today is to substantially increase the amount of information -- the number of qubits -- that can be processed simultaneously. In photonic approaches \cite{kok07,obrien09, pan12}, the number of qubits can be raised by increasing the number of photons. This is a fully scalable method, in principle, but in practice it is limited to 6-8 qubits by the present technology \cite{yao12}. An alternative approach is that of using an enlarged quantum dimensionality within the same photon, for example by combining different degrees of freedom, such as polarization, time-bin, wavelength, propagation paths, or transverse modes such as orbital angular momentum \cite{mair01,barreiro05,molina07,lanyon09,ceccarelli09,nagali10pra,straupe10,nagali10prl}. Although not scalable, the latter approach may allow for a substantial increase in the number of qubits \cite{gao10,pile11,dada11,munro12}. Ideally, one would therefore like to combine these two methods and be able to dynamically switch from one to the other, depending on the specific needs, even during the computational process itself.

To this purpose, a quantum process, called ``quantum state joining'', in which two arbitrary qubits initially encoded in separate input photons are combined into a single output photon, within a four-dimensional quantum space, has been recently introduced and experimentally demonstrated \cite{vitelli13}. The inverse process was also proposed, in which the four-dimensional quantum information carried in a single input photon is split into two output photons, each carrying a qubit \cite{vitelli13}. Both processes are in principle iterable \cite{vitelli13}, and hence may be used to realize an interface for converting a multi-photon encoding of quantum information into a single-photon higher-dimensional one and vice versa, thus enabling a full integration of the two encoding methods. These processes can be used to multiplex and demultiplex the quantum information in photons, for example with the purpose of using a smaller number of photons in lossy transmission channels. The idea of multiplexing/demultiplexing the quantum information in photons was for example proposed in \cite{garciaescartin08}, but a complete implementation scheme had not been developed (in particular, the proposal reported in Ref.\ \cite{garciaescartin08} relies on the existence of a hypothetical CNOT gate in polarization encoding between photons which is independent of the spatial mode of the photons, but the proposal does not include a discussion on how to realize this gate in practice) and had not been experimentally demonstrated. In addition, the quantum joining and splitting processes might also find application in the interfacing of multiple photonic qubits with a matter-based quantum register \cite{julsgaard04}, another crucial element of future quantum information networks \cite{kimble08}. For example, interfacing with multilevel quantum registers \cite{grace06} may be facilitated by the quantum joining/splitting schemes.

Let us first recall the main definitions of the quantum state joining and splitting processes. To make the language simpler and closer to the experimental demonstration reported in Ref.\ \cite{vitelli13}, we will initially refer to the polarization encoding of the qubits, although there is no general requirement on the choice of encoding at input and output. Let us then assume that two incoming photons, labeled 1 and 2, carry two polarization-encoded qubits, namely
\begin{align}
  \ket{\psi}_1 &= \alpha \ket{H}_{1} + \beta \ket{V}_{1}, \nonumber\\
  \ket{\phi}_2 &= \gamma \ket{H}_{2} + \delta \ket{V}_{2}, \label{inputphotons}
\end{align}
where $\ket{H}$ and $\ket{V}$ denote the states of horizontal and vertical linear polarization, corresponding to the logical 0 and 1, respectively. The two photons together form a (separable) quantum system, whose overall quantum state is given by the tensor product
\begin{align}
  \ket{\psi}_1 \otimes \ket{\phi}_2 &= \alpha \gamma \ket{H}_1 \ket{H}_2 +
  \alpha \delta \ket{H}_1 \ket{V}_2 \nonumber\\
  &+ \beta \gamma \ket{V}_1 \ket{H}_2 +
  \beta \delta \ket{V}_1 \ket{V}_2
\end{align}
The physical process of ``quantum state joining'' corresponds to transforming this two-photon system into a single-photon one, i.e., in an outcoming photon 3 having the following quantum state:
\begin{align}
  \ket{\Psi}_3 = \alpha \gamma \ket{0}_3 + \alpha \delta \ket{1}_3 + 
  \beta \gamma \ket{2}_3 + \beta \delta \ket{3}_3,
\end{align}
where $\ket{n}$ with $n=0,1,2,3$ are four arbitrary single-photon orthogonal states, defining a four-dimensional logical basis of a ququart. Of course we cannot use only the two-dimensional polarization encoding for the outgoing photon. One possibility is to use four independent spatial modes. Another option, adopted in Ref.\ \cite{vitelli13}, is to use two spatial modes combined with the two polarizations. In the latter case, in the words of Neergaard-Nielsen \cite{neergaard13}, ``the information is transferred from a Hilbert space of size 2 (photons) $\times$ 2 (polarizations) to a Hilbert space of size 1 (photon) $\times$ 2 (polarizations) $\times$ 2 (paths)''.

More generally, the joining process should work even for entangled qubits, both internally entangled (i.e., the two photons are entangled with each other) and externally entangled (the two photons are entangled with other particles outside the system). In the first case, the four coefficients obtained in the tensor product $\alpha\gamma, \alpha\delta, \beta\gamma, \beta\delta$ are replaced with four arbitrary coefficients $\alpha_0, \alpha_1, \alpha_2, \alpha_3$. In the second case, the four coefficients are replaced with four kets representing different quantum states of the external entangled system. Quantum splitting is defined as the inverse process, transforming an input photon 3 in the two photons 1 and 2.

The basic difficulty with implementing these state joining/splitting processes is that a form of interaction between photons is needed. But photons do not interact in vacuum and exhibit exceedingly weak interactions in ordinary nonlinear media. A way to introduce an effective interaction, known as the Knill-Laflamme-Milburn (KLM) method \cite{knill01}, is based on exploiting two-photon interferences and a subsequent ``wavefunction collapse'' occurring on measurement. This idea allowed for example the first experimental demonstrations of controlled-NOT (CNOT) quantum logical gates among qubits carried by different photons \cite{obrien03,pittman03,gasparoni04,zhao05}, and is now at the basis of the quantum joining process we are considering here.

To get the main idea of the quantum joining implementation, consider again the two input photons given in Eq.\ (\ref{inputphotons}). A single CNOT gate using one photon qubit as ``target'' and the other as ``control'' may be used to transfer a qubit from a photon to another, if the receiving photon initially carried a zeroed qubit. In order to obtain the state joining, we might then try for example to transfer the qubit $\phi$ from photon 2 to photon 1, while preserving the other qubit $\psi$ by storing it into a different degree of freedom of photon 1 (for example spatial modes). However, the interference processes utilized in the KLM CNOT require the two photons to be indistinguishable in everything, except for the qubit $\phi$ involved in the transfer. So, they are disrupted by the presence of the second qubit $\psi$ carried by the target photon, even if stored in different degrees of freedom.

To get around this obstacle, Vitelli \textit{et al.} in Ref.\ \cite{vitelli13} proposed a scheme that is based on the following three main subsequent steps: (i) ``unfold'' the target qubit $\psi$ (carried by input photon 1) initially travelling in mode $t$, by turning it into the superposition $\alpha\ket{H}_{t_1}+\beta\ket{H}_{t_2}$ of two zeroed polarization qubits, travelling in separate optical modes $t_1$ and $t_2$; (ii) duplicate the control qubit $\phi$ (carried by input photon 2) travelling in mode $c$ on an ancillary photon travelling in mode $a$, thus creating the entangled state $\gamma\ket{H}_c\ket{H}_a+\delta\ket{V}_c\ket{V}_a$; (iii) execute two KLM-like CNOT operations (of the Pittman kind \cite{pittman01,pittman03}), one with modes $c$ and $t_1$, the other with modes $a$ and $t_2$. In this way, each CNOT operates with a target photon that carries a zeroed qubit and no additional information, but the target photon is always interacting with either the control qubit or its entangled copy.

To complete the process, the photons travelling in modes $c$ and $a$ must be finally measured. For certain outcomes of this measurement, occurring with probability 1/32, the outgoing target photon is then collapsed in the final ``joined'' state $\ket{\psi\otimes\phi}=\alpha\gamma\ket{H}_{t_1} +\alpha\delta\ket{V}_{t_1} +\beta\gamma\ket{H}_{t_2} +\beta\delta\ket{V}_{t_2}$, which contains all the quantum information of the two input photons. The success probability can be raised to 1/8 by exploiting a feed-forward scheme and using other measurement outcomes. This probabilistic feature of the setup is common to all KLM-based implementations of CNOT gates (although, in principle, the success probability could be raised arbitrarily close to 100\% by using a large number of ancilla photons).

This paper is structured as follows. In Section II, the joining/splitting schemes are revisited by adopting the more general photon occupation-number formalism and some variants of the original schemes are introduced which do not need a projection and feed-forward mechanism to work (not considering the CNOT implementation), although at the price of using a doubled number of CNOT gates. In Section III, we then develop a formal proof of the fact that the quantum joining is impossible for an arbitrary linear optical scheme involving only two photons and a final post-selection step. Hence, at least one ancilla photon is needed (or the presence of optical nonlinearity). In Section IV, we analyze the relationship between the joining process of two photonic qubits and a particular class of three-photon entangled states, in which two photons are separately entangled with a common ``intermediate'' photon. We show that the quantum joining process can be used to create such cluster states and that, conversely, having at one's disposal one of these states, the quantum joining of two other photons can be immediately achieved by a teleportation scheme. We also note that these three-photon entangled states are of the same kind as the ``linked'' multiphoton states first introduced by Yoran and Reznik to perform deterministic quantum computation with linear optics \cite{yoran03}. Finally, in Section V, we draw some concluding remarks.

\section{Joining and splitting schemes in a photon-number notation}
In this Section, we revisit the joining/splitting schemes introduced in Ref.\ \cite{vitelli13} adopting the more general photon-number notation, as opposed to the polarization-ket notation used in the Introduction and in \cite{vitelli13}. In particular, photonic qubits will be represented as pairs of modes, with one photon that can occupy either one, as in the ``dual-rail'' qubit encoding. Of course, the two modes can also correspond to two orthogonal polarizations of a single spatial mode, thus reproducing the polarization-encoding case. 

Given two modes forming a qubit, the $\ket{10}$  ket, where the 0’s and 1’s refer here to the photon numbers, corresponds to having a photon in the first mode, encoding the logical 0 of the qubit. The $\ket{01}$  ket will then represent the photon in the second path, encoding the logical 1 of the qubit. For our schemes, we will however also need the $\ket{00}$ ket, representing a vacuum state, i.e. the ``empty'' qubit.
\begin{figure*}[htbp]
    \includegraphics[width=16cm]{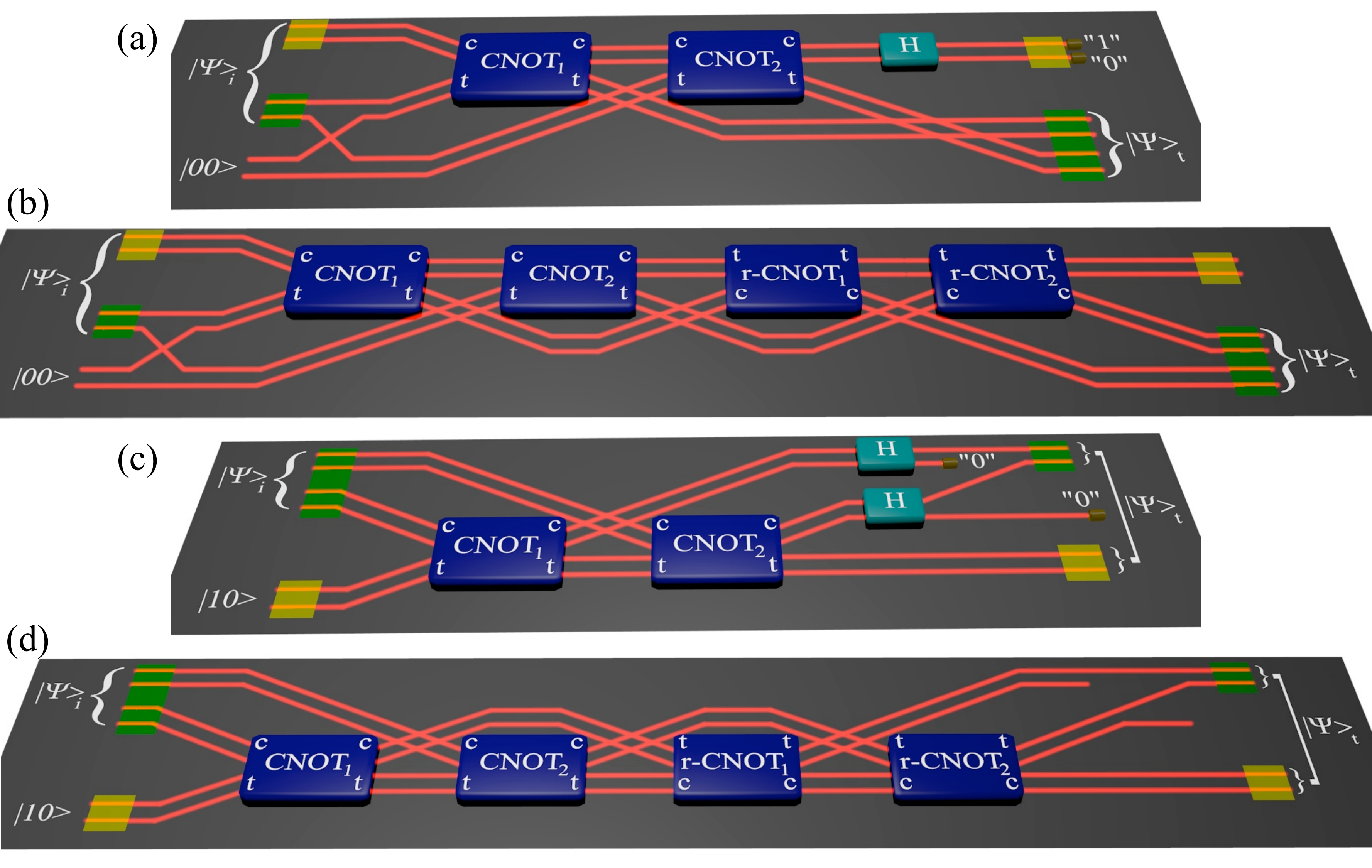}
    \caption{\label{figjoin} (Colors online) Optical schemes for quantum state joining and splitting. Each line represents a separate photonic mode (either spatial or of polarization). A qubit is represented by a double parallel line, as in a dual-rail implementation. A single-photon ququart, as obtained after the quantum joining, is represented by four parallel lines. H stands for a Hadamard quantum gate, CNOT for a controlled NOT quantum gate and r-CNOT for a CNOT gate in which the control and target ports have been reversed. The $``0''$ sign corresponds to a vacuum detection (no photons). $\ket{\Psi}_i$ is the input state and $\ket{\Psi}_t$ the final (target) state. (a) Scheme for quantum joining based on a double CNOT gate and final projection. Feed-forward is needed to obtain deterministic behavior (not considering the CNOT contribution). (b) Alternative scheme for deteministic quantum joining, using four CNOT gates, with the second two gates having inverted control and target ports. This leads to a deterministic behavior without projection and feed-forward (not considering the CNOT success probability). (c) Scheme for quantum splitting, with double CNOT gate and final projection. This scheme is probabilistic (with a 50\% success probability, not considering the CNOT gates contribution) and could be made deterministic only by combining quantum non-demolition measurements and feed-forward. (d) Alternative scheme for deterministic quantum splitting by using four CNOT gates (not considering the CNOT success probability).}
\end{figure*}

Let us first consider the joining process, schematically illustrated in Fig.\ \ref{figjoin}a. Labelling as $c$ (for control) and $t$ (for target) the travelling modes of the two input photons, the input state is taken to be the following:
\begin{eqnarray}
\ket{\Psi}_i &=& \alpha_0\ket{10}_t\ket{10}_c + \alpha_1\ket{10}_t\ket{01}_c \nonumber\\
&&+ \alpha_2\ket{01}_t\ket{10}_c + \alpha_3\ket{01}_t\ket{01}_c, \label{inputn}
\end{eqnarray}
which may represent both separable and non-separable two-photon states.

The qubit ``unfolding'' step corresponds to adding two empty modes for photon $t$ and rearranging the four modes so as to obtain the following state:
\begin{eqnarray}
\ket{\Psi}_u &=& \alpha_0\ket{1000}_t\ket{10}_c + \alpha_1\ket{1000}_t\ket{01}_c \nonumber\\
&& + \alpha_2\ket{0010}_t\ket{10}_c + \alpha_3\ket{0010}_t\ket{01}_c \nonumber\\
&=& \alpha_0\ket{10}_{t_1}\ket{00}_{t_2}\ket{10}_c + \alpha_1\ket{10}_{t_1}\ket{00}_{t_2}\ket{01}_c \nonumber\\
&&+ \alpha_2\ket{00}_{t_1}\ket{10}_{t_2}\ket{10}_c + \alpha_3\ket{00}_{t_1}\ket{10}_{t_2}\ket{01}_c,
\label{inputunf}
\end{eqnarray}
where in the second expression we have split the four $t$ modes, so as to treat the first two as one qubit ($t_1$) and the final two as a second qubit ($t_2$). Notice that both of them are initialized to logical zero, but with the possibility for each of them to be actually empty.

Each of these qubits must now be subject to a CNOT gate, using the same $c$ qubit as control. The action of the CNOT gate in the photon-number notation is described by the following equations:
\begin{eqnarray}
\hat{U}_{\text{CNOT}}\ket{10}_c\ket{10}_t&=&\ket{10}_c\ket{10}_t \nonumber\\
\hat{U}_{\text{CNOT}}\ket{01}_c\ket{10}_t&=&\ket{01}_c\ket{01}_t \nonumber\\
\hat{U}_{\text{CNOT}}\ket{10}_c\ket{01}_t&=&\ket{10}_c\ket{01}_t \\
\hat{U}_{\text{CNOT}}\ket{01}_c\ket{01}_t&=&\ket{01}_c\ket{10}_t \nonumber
\end{eqnarray}
However, in the present implementation of the quantum fusion we need to have the CNOT act also on ``empty target qubits'', that is vacuum states. For these we assume the following behavior:
\begin{eqnarray}
\hat{U}_{\text{CNOT}}\ket{10}_c\ket{00}_t&=&\eta\ket{10}_c\ket{00}_t \nonumber\\
\hat{U}_{\text{CNOT}}\ket{01}_c\ket{00}_t&=&\eta\ket{01}_c\ket{00}_t 
\end{eqnarray}
where $\eta$ is a possible complex amplitude rescaling relative to the non-vacuum case. A unitary CNOT must have $|\eta|=1$, but probabilistic implementations do not have this requirement. The quantum joining scheme works if the two CNOTs have the same $\eta$. In particular the CNOTs implementation proposed by Pittman et al.\ and used in \cite{vitelli13} have $\eta=1$, so for brevity we will remove $\eta$ in the following expressions.

Let us then consider the action of these two CNOT gates to the unfolded state given in Eq.\ (\ref{inputunf}):
\begin{eqnarray}
\ket{\Psi}_f&=&\hat{U}_{\text{CNOT}_2}\hat{U}_{\text{CNOT}_1}\ket{\Psi}_u \nonumber\\
&=& \alpha_0\ket{10}_{t1}\ket{00}_{t2}\ket{10}_c + \alpha_1\ket{01}_{t1}\ket{00}_{t2}\ket{01}_c \nonumber\\
&& + \alpha_2\ket{00}_{t1}\ket{10}_{t2}\ket{10}_c + \alpha_3\ket{00}_{t1}\ket{01}_{t2}\ket{01}_c \label{interm1}
\end{eqnarray}
If now we project the $c$ photon state on $\ket{+}=(\ket{10}+\ket{01})/\sqrt{2}$, so as to erase the $c$ qubit, and reunite the $t_1$ and $t_2$ kets, we obtain
\begin{equation}
\ket{\Psi}_{t}= \alpha_0\ket{1000}_t+\alpha_1\ket{0100}_t +\alpha_2\ket{0010}_t + \alpha_3\ket{0001}_t, \label{outputn}
\end{equation}
which is the desired joined state. Since the $c$ qubit measurement has a probability of 50\% of obtaining $\ket{+}$, without feed-forward the described method has a success probability of 50\% not considering the CNOT success probability.

If the outcome of the $c$ measurement is $\ket{-}=(\ket{10}-\ket{01})/\sqrt{2}$, we obtain the following target state:
\begin{equation}
\ket{\Psi'}_t=\alpha_0\ket{1000}_t-\alpha_1\ket{0100}_t+\alpha_2\ket{0010}_t-\alpha_3\ket{0001}_t.
\end{equation}
This state can be transformed back into $\ket{\Psi}_{t}$, as given in Eq.\ (\ref{outputn}), by a suitable unitary transformaton. Therefore, the success probability of the joining scheme can be raised to 100\% (again not considering CNOTs success probabilities) by a simple feed-forward mechanism.

Alternative to this feed-forward scheme, one might recover a deterministic behavior for the joining step (not considering the CNOT) by avoiding the $c$-photon projection and applying two additional CNOT gates in which control and target qubits have swapped roles, so as to ``disentangle'' the $c$ and $t$ photons. This alternative is illustrated in Fig.\ \ref{figjoin}b. In other words, after the first two CNOTs, we must apply a third CNOT with $t_1$ used as control and $c$ as target and a fourth CNOT with $t_2$ as control and $c$ as target. This time, for a proper working of the scheme, we must consider the possibility that the control port of the CNOT is empty. As for the previous case of empty target qubit, the CNOT outcome in this case is taken to be simply identical to the input except for a possible amplitude rescaling, i.e.
\begin{eqnarray}
\hat{U}_{\text{CNOT}}\vert00\rangle_c\vert10\rangle_t &=& \eta'\vert00\rangle_c\vert10\rangle_t \nonumber\\
\hat{U}_{\text{CNOT}}\vert00\rangle_c\vert01\rangle_t &=& \eta'\vert00\rangle_c\vert01\rangle_t
\end{eqnarray}
which is what occurs indeed in most CNOT implementations. Moreover, we again assume $\eta'=1$ in the following, for simplicity. Let us then take the state given in Eq.\ (\ref{interm1}) and apply the two ``reversed'' CNOT gates, denoted as r-CNOT:
\begin{eqnarray}
\ket{\Psi}_{f2}&=&\hat{U}_{\text{r-CNOT}_2}\hat{U}_{\text{r-CNOT}_1}\ket{\Psi}_f \nonumber\\
&=& \ket{10}_c \left( \alpha_0\ket{10}_{t1}\ket{00}_{t2}+\alpha_1\ket{01}_{t1}\ket{00}_{t2} \right. \nonumber\\
&& \left. +\alpha_2\ket{00}_{t1}\ket{10}_{t2} +\alpha_3\ket{00}_{t1}\ket{01}_{t2} \right) \\
&=& \ket{10}_c \otimes \ket{\Psi}_t \nonumber
\end{eqnarray}
where $\ket{\Psi}_t$ is given in Eq.\ (\ref{outputn}). After this step, one can just discard photon $c$ and photon $t$ will continue to hold the entire initial quantum information. We notice that this second implementation method does not require the feed-forward, which is an advantage in terms of resources, but it needs four CNOTs instead of two. Since CNOT implementations based on linear optics are actually probabilistic, the final success probability will be significantly smaller than the first method without feed-forward, so this scheme is not convenient at the present stage. 

Let us now move on to the quantum state splitting process, illustrated in Fig.\ \ref{figjoin}c.  We assume to have an input photon encoding two qubits (i.e., a ququart) in the four-path state
\begin{equation}
\ket{\psi}_i = \alpha_0\ket{1000}+\alpha_1\ket{0100}+\alpha_2\ket{0010} +\alpha_3\ket{0001}
\end{equation}
We label this input photon as $c$ (for control). We also label the first two modes as $c_1$ and the last two modes as $c_2$. We then take another photon, labeled as $t$ (for target), that is initialized in the logical zero state of two other modes, so that the initial two-photon state is the following:
\begin{eqnarray}
\ket{\Psi}_i &=& (\alpha_0\ket{10}_{c_1}\ket{00}_{c_2} + \alpha_1\ket{01}_{c_1}\ket{00}_{c_2} \nonumber\\
&&+ \alpha_2\ket{00}_{c_1}\ket{10}_{c_2} +\alpha_3\ket{00}_{c_1}\ket{01}_{c_2})\ket{10}_t \nonumber
\end{eqnarray}
We now apply the two CNOT gates in sequence, using the $t$ photon as target qubit in both cases and the $c_1$ and $c_2$ modes of the $c$ photon as control qubit in the first and second CNOT, respectively. In order to do these operations properly, we need to define the CNOT operation also for the case when the control qubit is empty, as already discussed above. Hence, we obtain
\begin{eqnarray}
&&\hat{U}_{\text{CNOT}_1}\hat{U}_{\text{CNOT}_2}\ket{\Psi}_i = \alpha_0\ket{10}_{c_1}\ket{00}_{c_2}\ket{10}_t \nonumber\\
&&+ \alpha_1\ket{01}_{c_1}\ket{00}_{c_2}\ket{01}_t + \alpha_2\ket{00}_{c_1}\ket{10}_{c_2}\ket{10}_t \nonumber\\
&&+ \alpha_3\ket{00}_{c_1}\ket{01}_{c_2}\ket{01}_t\nonumber
\end{eqnarray}
Now we need to erase part of the information contained in the control photon. This is accomplished by projecting onto $\ket{+}_{c_i}=(\ket{10}_{c_i}+\ket{01}_{c_i})/\sqrt{2}$ combinations of the first and second pairs of modes, while keeping unaffected their relative amplitudes. In other words, we must apply an Hadamard transformation on each pair of modes, and take as successful outcome only the logical-zero output (corresponding to the $\ket{+}$ combination of the inputs). The projection is actually performed by checking that no photon comes out of the $\ket{-}$ (i.e., logical one) output ports of the Hadamard. The two surviving output modes are then combined into a single output $c$-photon qubit, which together with the $t$-photon qubit form the desired split-qubit output. Indeed, we obtain the following projected output:
\begin{eqnarray}
\ket{\Psi}_f &=& \alpha_0\ket{10}_{c}\ket{10}_t + \alpha_1\ket{10}_{c}\ket{01}_t \nonumber\\
&& + \alpha_2\ket{01}_{c}\ket{10}_t +\alpha_3\ket{01}_{c}\ket{01}_t \label{outputsplit}
\end{eqnarray}
which describes the same two-qubit state as the input, but encoded in two photons instead of one. The proposed scheme for splitting has a 50\% probability of success, not considering the CNOT contribution. It might be again possible to bring the probability to 100\% (not considering CNOTs) by detecting the actual $c$-photon output mode pair after the Hadamard gates by a quantum non-demolition approach or in post-selection, and then applying an appropriate unitary transformation to the $t$ photon.

Also in this case, we can replace the projection and feed-forward scheme by the action of a third and fourth CNOT gates in which the target and control roles are reversed, that is, using the $t$ photon as control and $c_1$ and $c_2$ as targets of the third and fourth CNOT gates, respectively. This is shown in Fig.\ \ref{figjoin}d. After the four CNOTs, one obtains the following state:
\begin{eqnarray}
&&\alpha_0\ket{10}_{c_1}\ket{00}_{c_2}\ket{10}_t + \alpha_1\ket{10}_{c_1}\ket{00}_{c_2}\ket{01}_t \nonumber\\
&&+ \alpha_2\ket{00}_{c_1}\ket{10}_{c_2}\ket{10}_t + \alpha_3\ket{00}_{c_1}\ket{10}_{c_2}\ket{01}_t \nonumber\\
&& = \alpha_0\ket{1000}_c\ket{10}_t + \alpha_1\ket{1000}_c\ket{01}_t \nonumber\\
&&+ \alpha_2\ket{0010}_c\ket{10}_t + \alpha_3\ket{0010}_c\ket{01}_t,
\end{eqnarray}
where in the second equality we have regrouped the four $c$ modes. Then, an inverse unfolding step, that is simply discarding the second and fourth mode of the $c$ photon, which are always empty, will lead to final state $\ket{\Psi}_f$ given in Eq.\ (\ref{outputsplit}) with 100\% probability. In this splitting case, the advantage of using this alternative scheme is more marked, as it is the only possibility to avoid postselection or quantum nondemolition steps.

The CNOT gates utilized in the joining and splitting processes described in this Section can be implemented using different methods. In particular, since the photons being processed in each CNOT stage have no additional information, linear-optics KLM-like schemes based on two-photon interference can be used. It is for this reason that our schemes require the unfolding step and a double CNOT, rather than using a single CNOT for transferring the qubit from one photon to the other (if nonlinear-optical CNOT gates will ever be realized, they might possibly allow for a CNOT operation to be performed while another degree of freedom is present and remains unaffected, thus making the joining/splitting schemes much simpler). The only requirement for these CNOT gates is that they must be applicable also to the case when one of the input qubits is empty, i.e., there is a vacuum state at one input port. We shall see in the next Section, that this is a nontrivial requirement. 

\section{Unfeasibility of the quantum joining with two photons and post-selection}
\label{sect:bs}
There exist different linear-optical based implementations of CNOT gates. The simplest are those based on post-selection and not requiring ancillary photons, such as the scheme first proposed by Ralph et al.\ \cite{ralph02} and Hofmann and Takeuchi \cite{hofman02} and later experimentally demonstrated by O'Brien et al.\ \cite{obrien03}. Although such CNOT gates are based on post-selection and hence require destroying the output photons, there exist also schemes for applying several CNOT gates in sequence, with only one final post-selection step \cite{ralph04}. These schemes require only the two photons to be combined and a final post-selection step based on photon detection. Given the CNOT-based general scheme for quantum joining described in the previous Section, it is then natural to try an implementation exploiting these schemes.

More generally, one might ask whether a proper mixing of the two photon modes in a suitable linear-optical setup, followed by a filtering step on one of the two photons might suffice to obtain the joining onto the remaining photon. In this Section we prove that this is not possible. No possible unitary evolution of the two photons as resulting from propagation through an arbitrary linear optical system, followed by an arbitrary projection for one of the two photons can lead to the quantum joining. This in turn shows that the joining scheme cannot be based on the CNOT gates of Ralph's kind and requires at least one ancilla photon. With one ancilla photon, it is possible to implement for example the CNOT gates proposed by Pittman \cite{pittman01} and thus successfully obtain the quantum joining of two photon states, as demonstrated in Ref.\ \cite{vitelli13}. Of course the demonstrated implementation is probabilistic, because the CNOT gates are implemented in a probabilistic way.

The statement we prove is the following:
\begin{thm}
It is impossible to transfer all the quantum information encoded in two input photons into one output photon using linear optics and post-selection, without including ancillary photons in the process.
\end{thm}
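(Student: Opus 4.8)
The plan is to work in the second-quantized (photon-number) picture and show that, after the most general linear-optical evolution and post-selection, the single surviving photon depends on the input data through a \emph{fixed linear map} whose rank can never reach the value $4$ required for faithful joining. First I would fix the framework. Write the two input photons as one excitation in each of two dual-rail qubits, i.e.\ in input modes $1,2$ and $3,4$, with a \emph{general} amplitude tensor
\begin{equation*}
\ket{\Psi}_i=\sum_{i\in\{1,2\},\,k\in\{3,4\}} M_{ik}\,\hat a_i^\dagger \hat a_k^\dagger\ket{0},
\end{equation*}
so that entangled inputs ($M_{13},M_{14},M_{23},M_{24}=\alpha_0,\alpha_1,\alpha_2,\alpha_3$ arbitrary) are included, as demanded by the definition of joining. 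A passive linear-optical network — with arbitrarily many extra modes, all initialized to vacuum, so that no ancilla \emph{photon} is added — acts as a mode unitary $\hat a_m^\dagger\mapsto\sum_n U_{nm}\hat a_n^\dagger$, sending the $i$-th input operator to the single-photon state $\ket{U_i}=\sum_n U_{ni}\ket{n}$, the $i$-th column of $U$.

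Next I would argue that any post-selection leaving a one-photon output amounts to annihilating exactly one photon in some detected pure state $\ket{\chi}$, i.e.\ applying $\hat\chi=\sum_m\chi_m^*\hat a_m$: any rotation of the detection basis is absorbed into $U$, and any manipulation of the lone survivor afterwards is a single-photon unitary that cannot help. The bosonic identity $\hat\chi\,\hat b^\dagger\hat c^\dagger\ket0=\braket{\chi}{b}\ket{c}+\braket{\chi}{c}\ket{b}$ then yields the (unnormalized) conditional output
\begin{equation*}
\ket{\mathrm{out}}=\sum_{i,k}M_{ik}\bigl(\xi_i\ket{U_k}+\xi_k\ket{U_i}\bigr),\qquad \xi_i:=\braket{\chi}{U_i},
\end{equation*}
which exhibits $\ket{\mathrm{out}}=\sum_{a=0}^3\alpha_a\ket{v_a}$ as a fixed linear function of the input, with $\ket{v_0}=\xi_3\ket{U_1}+\xi_1\ket{U_3}$, $\ket{v_1}=\xi_4\ket{U_1}+\xi_1\ket{U_4}$, $\ket{v_2}=\xi_3\ket{U_2}+\xi_2\ket{U_3}$, $\ket{v_3}=\xi_4\ket{U_2}+\xi_2\ket{U_4}$. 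A short lemma — if an input-linear vector is proportional to $\sum_a\alpha_a\ket{o_a}$ for \emph{every} input, with the $\ket{o_a}$ independent, then the proportionality constant is fixed and $\ket{v_a}\propto\ket{o_a}$ — shows that faithful joining \emph{requires the four $\ket{v_a}$ to be linearly independent}.

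The decisive step is to show this can never happen. All four $\ket{v_a}$ lie in $\mathrm{span}\{\ket{U_1},\dots,\ket{U_4}\}$, and since the columns of $U$ are orthonormal their rank equals that of the coefficient matrix
\begin{equation*}
\begin{pmatrix}\xi_3&0&\xi_1&0\\ \xi_4&0&0&\xi_1\\ 0&\xi_3&\xi_2&0\\ 0&\xi_4&0&\xi_2\end{pmatrix}.
\end{equation*}
I would then check that its determinant vanishes \emph{identically} in $\xi_1,\dots,\xi_4$: concretely, $\xi_2\xi_4\ket{v_0}-\xi_2\xi_3\ket{v_1}-\xi_1\xi_4\ket{v_2}+\xi_1\xi_3\ket{v_3}=0$ for every network $U$ and detection $\ket{\chi}$. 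Hence $\mathrm{rank}\{\ket{v_a}\}\le 3$ always, the output photon can carry at most a three-dimensional image of the two-qubit input, and since further linear optics acts on it as a single-photon unitary preserving this rank, faithful joining is impossible.

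The main obstacle I anticipate is not the algebra but the two \emph{generality} claims underpinning it: I must argue carefully that (i) restricting post-selection to the annihilation of one photon in a pure mode loses nothing, so that no more exotic conditional measurement leaving a single photon can do better, and (ii) admitting arbitrary, including entangled, inputs is legitimate and is precisely what pins the requirement at rank $4$. For completeness I would note that even if one demanded joining only for product inputs, the identically vanishing determinant forces a nontrivial kernel whose projectivization meets the Segre quadric of product tensors, so some product input is still annihilated and its information lost.
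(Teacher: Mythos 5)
Your proposal is correct and follows essentially the same route as the paper's proof: the same second-quantized parametrization of the general (possibly entangled) two-photon input, the same mode-unitary description of the linear network, the same single-photon annihilation projector, the same four ``symmetrized'' output vectors, and the same $4\times4$ coefficient matrix whose determinant vanishes identically. Your additions --- the explicit null combination $\xi_2\xi_4\ket{v_0}-\xi_2\xi_3\ket{v_1}-\xi_1\xi_4\ket{v_2}+\xi_1\xi_3\ket{v_3}=0$, the lemma justifying why linear independence of the $\ket{v_a}$ is necessary, and the remark on product inputs --- are sound refinements of the same argument rather than a different approach.
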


\begin{proof}
The two-photon input state can be written in full generality as follows:
\begin{eqnarray}
\ket{\Psi}_i &=& \alpha_0 \ket{1010} + \alpha_1\ket{1001} + \alpha_2\ket{0110} + \alpha_3\ket{0101} \nonumber\\
&=& \left( \alpha_0 \hat{a}^+_1 \hat{a}^+_3 + \alpha_1 \hat{a}^+_1 \hat{a}^+_4 + \alpha_2 \hat{a}^+_2 \hat{a}^+_3 + \alpha_3 \hat{a}^+_2 \hat{a}^+_4 \right) \ket{\emptyset} \nonumber\\
&& \label{inputstate}
\end{eqnarray}
where the four coefficients $\alpha_0, \alpha_1, \alpha_2, \alpha_3$ define the input quantum information, $\hat{a}^+_i$ denote the creation operators for an arbitrary orthonormal set of input modes $\ket{\psi}_i$, and $\ket{\emptyset}$ denotes the global vacuum state. The mode-indices $i$ here can be taken to include also the polarization degree of freedom, and we have selected four arbitrary modes 1-4 to encode the input information, with modes 1-2 used for one qubit and modes 3-4 for the other (possibly entangled with each other).

The propagation through an arbitrary linear-optical system can be described by the following transformation of the creation operators:
\begin{equation}
\hat{a}^{+}_i \rightarrow \hat{b}^{+}_i = \sum_{j} u_{ij} \hat{a}^{+}_j
\label{optprop}
\end{equation}
where $u_{ij}$ are the coefficients of a unitary matrix describing the propagation and the operators $\hat{b}^+_j$ create the propagated (output) modes $\ket{\chi}_j$. Applying this transformation to the input state Eq.\ (\ref{inputstate}) we obtain the following propagated two-photon state (here we are using the Schroedinger representation, in which the evolution acts on the state):
\begin{equation}
\ket{\Psi}_p  = \left( \alpha_0 \hat{b}^+_1 \hat{b}^+_3 + \alpha_1 \hat{b}^+_1 \hat{b}^+_4 + \alpha_2 \hat{b}^+_2 \hat{b}^+_3 + \alpha_3 \hat{b}^+_2 \hat{b}^+_4 \right) \ket{\emptyset}
\label{propagatedstate}
\end{equation}
Now, let us act on this state with a projector $\hat{\Pi}$ corresponding to the detection of a single photon in the arbitrary mode $\ket{\phi}=\sum_h \phi_h \ket{\chi}_h$, as given by the following:
\begin{equation}
\hat{\Pi} = \sum_h \phi_h^* \hat{b}_h .
\end{equation}
Thus, we obtain the following projected one-photon state
\begin{eqnarray}
\ket{\Psi}_f  &=& \hat{\Pi}\ket{\Psi}_p = \sum_h \phi_h^* \hat{b}_h \left( \alpha_0 \hat{b}^+_1 \hat{b}^+_3 + \alpha_1 \hat{b}^+_1 \hat{b}^+_4 \right. \nonumber\\
&& \left. + \alpha_2 \hat{b}^+_2 \hat{b}^+_3 + \alpha_3 \hat{b}^+_2 \hat{b}^+_4 \right) \ket{\emptyset} \label{finalstate}\\
&=& \alpha_0 \left( \phi_3^* \ket{\chi}_1 + \phi_1^* \ket{\chi}_3\right) +
\alpha_1 \left( \phi_4^* \ket{\chi}_1 + \phi_1^* \ket{\chi}_4\right) \nonumber\\
&& + \alpha_2 \left( \phi_3^* \ket{\chi}_2 + \phi_2^* \ket{\chi}_3\right) +
\alpha_3 \left( \phi_4^* \ket{\chi}_2 + \phi_2^* \ket{\chi}_4\right) \nonumber
\end{eqnarray}
Hence, owing to the bosonic nature of the photons, the final state results to be a linear combination of the following four ``symmetrized'' optical modes
\begin{eqnarray}
\ket{u}_0 &=& \phi_3^* \ket{\chi}_1 + \phi_1^* \ket{\chi}_3, \nonumber\\
\ket{u}_1 &=& \phi_4^* \ket{\chi}_1 + \phi_1^* \ket{\chi}_4, \nonumber\\
\ket{u}_2 &=& \phi_3^* \ket{\chi}_2 + \phi_2^* \ket{\chi}_3, \\
\ket{u}_3 &=& \phi_4^* \ket{\chi}_2 + \phi_2^* \ket{\chi}_4. \nonumber
\end{eqnarray}
The input quantum information will be preserved if and only if the four modes $\ket{u}_i$ form a linearly independent set. This in turn will depend on the determinant of the following matrix $\mathcal{M}$ of coefficients, expressing the linear dependence of the four $\ket{u}_i$ modes on the propagated modes $\ket{\chi}_i$:
\begin{align}
    \mathcal{M} = \left( 
    \begin{array}{cccc}
      \phi_3^* & 0 & \phi_1^* & 0  \\
      \phi_4^* & 0 & 0 & \phi_1^*  \\
      0 & \phi_3^* & \phi_2^* & 0  \\
      0 & \phi_4^* & 0 & \phi_2^* 
    \end{array} \right),
\end{align}
A simple calculation shows that the determinant of this matrix is identically nil, thus proving the statement. If the optical system includes losses from media absorption, these can be included in the treatment as additional non-optical excitation modes in which the input optical modes can be transformed in the course of propagation. In other words, Eq.\ (\ref{optprop}) will include also the creation operators of material excitations, although the latter will not contribute to the $\ket{u}_i$ and $\ket{\phi}$ modes. Thus the proof remains valid even in lossy optical systems.
\end{proof}

As a consequence of our proof, we can state that in general the mixing of two photons in a linear optical scheme followed by a final postselection step can only lead to a loss of some information, e.g., ending up with a qutrit instead of a ququart. Alternatively, one may somehow preserve the initial information conditioned on the fact that there is ``less information to start with'', because the input two-photon state is properly constrained, for example, to a separable state \cite{grudka03,bartuskova06}.

Beside this mathematical proof, one might be interested in seeking a more physical explanation for why the joining scheme using two CNOT in series following the concept of Ref.\ \cite{ralph04} fails. To this purpose, we carried out a detailed analysis, of which we report here only the main conclusions. The problem is that the scheme given in Ref.\ \cite{ralph04} is conceived for executing multiple CNOT in series, with the assumption that each control or target port of all the gates is occupied by a photon carrying the corresponding qubit. In the case of quantum joining, instead, the target ports may see the presence of ``empty'' qubits (or vacuum states), which open up new possible photonic evolution channels in the setup that are not excluded in the final post-selection step and which are instead absent in the standard case. These channels alter the final probabilities and disrupt the CNOT proper workings.

\section{Three-photon entangled states}
\label{sect:3ent}
In this Section, we explore the relationship between the joining process of two photonic qubits and a particular class of three-photon entangled states (TPES), in which two photons are separately entangled with a common ``intermediate'' photon. This intermediate doubly-entangled photon must clearly hold two separate qubits, as defined by exploiting four orthogonal optical modes. A schematic diagram of this particular form of entangled cluster is given in Fig.\ \ref{figcluster}.
\begin{figure}[htbp]
    \includegraphics[width=8.4cm]{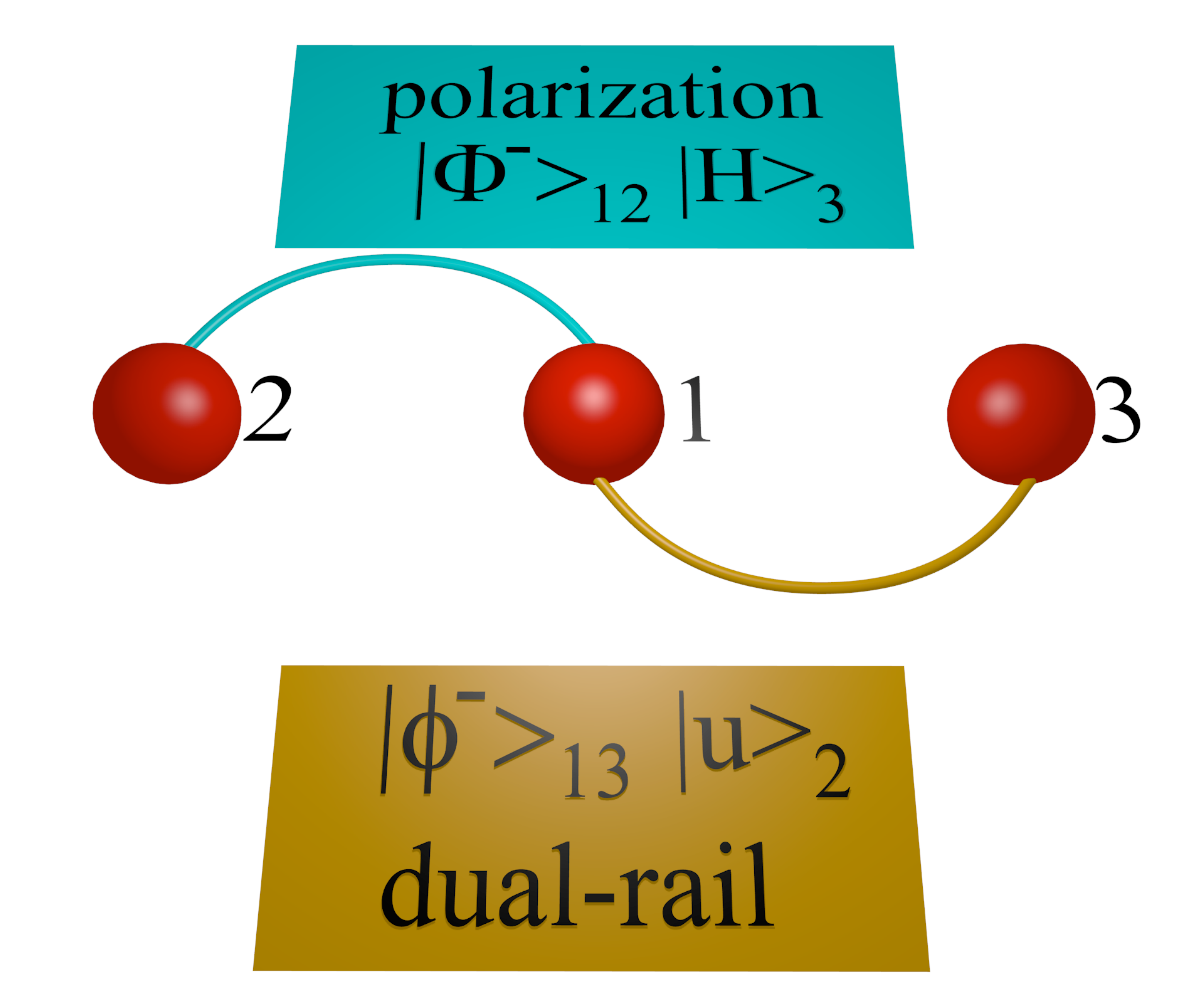}
    \caption{\label{figcluster} (Colors online) Schematic representation of the TPES state given in Eq.\ (\ref{eq:entphoton}) of the main text. The three red balls represent the three photons, with photon 1 separately entangled with photons 2 and 3. Each line correspond to an entanglement link. In the specific example, the upper (green) line corresponds to a polarization maximal entanglement in Bell state $\ket{\Phi^{-}}$, while the lower (yellow) line indicates a spatial-mode maximal entanglement (in a dual-rail basis) in Bell state $\ket{\phi^{-}}$. Other TPES states can be obtained by changing the specific Bell states used for the two entanglement links.}
\end{figure}

An example of such three-photon entangled states can be defined as follows:
\begin{align}
  \ket{\psi}_{123} &= \frac{1}{2} \left( \ket{H}_1\ket{V}_2-\ket{V}_1\ket{H}_2 \right) \ket{H}_3 \nonumber \\
  &\otimes \, \left(\ket{u}_1\ket{d}_3-\ket{d}_1\ket{u}_3 \right)\ket{u}_2, \label{eq:entphoton}
\end{align}
where photon 1 is the intermediate photon, entangled with photons 2 and 3, and we introduced $\ket{u}$ and $\ket{d}$, to refer to the ``up'' and ``down'' paths of a dual-rail qubit encoding. It is understood that the three photons are identified by a further label associated with propagation modes. Other possible pair-wise maximally-entangled TPES are obtained from \eqref{eq:entphoton} by exchanging $H$ and $V$ and/or $u$ and $d$ in photon 1, or by changing the $-$ sign into a $+$ in one or both the factors, for a total of 16 possible independent combinations. These can be also written more compactly in terms of the Bell basis of maximally entangled qubit pairs, defined as follows
\begin{subequations}\label{eq:bellbasis}
  \begin{align}\label{eq:bellbasis1}
    \ket{\Psi^{\pm}}_{ij} &= \frac{1}{\sqrt{2}}(\ket{H}_i \ket{H}_j \pm 
    \ket{V}_i \ket{V}_j) \ket{u}_i \ket{u}_j, \\ 
    \ket{\Phi^{\pm}}_{ij} &= \frac{1}{\sqrt{2}}(\ket{H}_i \ket{V}_j \pm 
    \ket{V}_i \ket{H}_j) \ket{u}_i \ket{u}_j, \label{eq:bellbasis2} \\
    \ket{\psi^{\pm}}_{ij} &= \frac{1}{\sqrt{2}}(\ket{u}_i \ket{u}_j \pm 
    \ket{d}_i \ket{d}_j) \ket{H}_i \ket{H}_j, \label{eq:bellbasis3} \\
    \ket{\phi^{\pm}}_{ij} &= \frac{1}{\sqrt{2}}(\ket{u}_i \ket{d}_j \pm 
    \ket{d}_i \ket{u}_j) \ket{H}_i \ket{H}_j. \label{eq:bellbasis4}
  \end{align}
\end{subequations}
Using this notation, Eq.\ \eqref{eq:entphoton}  can be for example rewritten as $\ket{\psi}_{123}=\ket{\Phi^{-}}_{12}\ket{H}_3\otimes\ket{\phi^{-}}_{13}\ket{u}_2$, and the other TPES are obtained by replacing one or both the Bell states with another one. Notice that states \eqref{eq:bellbasis1} and \eqref{eq:bellbasis2} are Bell states with respect to the polarization degree of freedom of the pair, while states \eqref{eq:bellbasis3} and \eqref{eq:bellbasis4} are Bell states with respect to the spatial-mode (or propagation path) degree of freedom of the pair.

Here and in the following discussion, for the sake of definiteness, we have adopted a notation referring to the specific case in which the double entanglement exploits two separate degrees of freedom, that is the polarization and a pair of spatial modes. We stress, however, that there is no actual requirement of using this specific choice of degrees of freedom for the validity of our analysis. There is even no need of using two separate degrees of freedom, as all qubit entanglements could for example also be encoded using a set of four spatial modes, e.g., four parallel paths, or four eigenmodes of the orbital angular momentum of light.

The state \eqref{eq:entphoton}, or anyone of the other TPES, can be used as a resource for carrying out a two-in-one qubit teleportation of the quantum state, i.e., the teleportation of the four-dimensional quantum state initially encoded in two input photons in a single output photon. Actually, the problem of preparing three-photon entangled states such as \eqref{eq:entphoton} is essentially equivalent to that of realizing the quantum joining. Indeed, quantum state joining can be used to prepare the three-photon entangled state \eqref{eq:entphoton} and, conversely, state \eqref{eq:entphoton} can be used to carry out the quantum state joining of two photonic qubits via teleportation. Let us see this in more detail.

To obtain the three-photon state \eqref{eq:entphoton}, or anyone of the other TPES, one must simply apply the quantum state joining protocol to two photons each taken from a separate entangled pair. In particular, one pair (say, photons 2 and 4) must be entangled in polarization and the other (photons 3 and 5) in the spatial degree of freedom defined by modes $\ket{u}$ and $\ket{d}$. Then photons 4 and 5 are state-joined into photon 1, so that their polarization and spatial modes properties are both transferred into this single photon. This leads immediately to state \eqref{eq:entphoton}.

Conversely, let us assume that we have initially three photons (labeled 1, 2, 3) in state \eqref{eq:entphoton} and that the qubits we want to join are encoded in two other photons (labeled 4 and 5), as described by the states 
\begin{align}
  \ket{\psi}_4 &= (\alpha \ket{H}_4 + \beta \ket{V}_4) \otimes \ket{u}_4,\\
  \ket{\psi}_5 &= \ket{H}_5 \otimes (\gamma \ket{u}_5 + \delta \ket{d}_5).
\end{align}

Recasting the overall 5-photon initial state $\ket{\psi}_{12345}=\ket{\psi}_{123} \ket{\psi}_{4} \ket{\psi}_{5} $ in terms of the 
basis \eqref{eq:bellbasis} for the states of the pairs 2, 4 and 3, 5, one obtains the following expression:
\begin{widetext}
  \begin{align}
    \ket{\psi}_{12345} &= \frac{1}{4} \ket{\Phi^+}_{24} 
    \ket{\phi^+}_{35} (\alpha \ket{H}_1 - \beta \ket{V}_1) 
    (\gamma \ket{u}_1 - \delta \ket{d}_1) 
    - \frac{1}{4} \ket{\Phi^+}_{24} \ket{\phi^-}_{35}(\alpha \ket{H}_1 
    - \beta \ket{V}_1) (\gamma \ket{u}_1 + \delta \ket{d}_1) 
    \nonumber \\
    &+ \frac{1}{4} \ket{\Phi^+}_{24} \ket{\psi^+}_{35}(\alpha \ket{H}_1 
    - \beta \ket{V}_1) 
    (\delta \ket{u}_1 - \gamma \ket{d}_1)
    - \frac{1}{4} \ket{\Phi^+}_{24} \ket{\psi^-}_{35}(\alpha \ket{H}_1 
    - \beta \ket{V}_1) 
    (\delta \ket{u}_1 + \gamma \ket{d}_1) \nonumber \\
    &- \frac{1}{4} \ket{\Phi^-}_{24} \ket{\phi^+}_{35}(\alpha \ket{H}_1 
    +\beta \ket{V}_1) 
    (\gamma \ket{u}_1 - \delta \ket{d}_1) 
    + \frac{1}{4} \ket{\Phi^-}_{24} \ket{\phi^-}_{35}(\alpha \ket{H}_1 
    + \beta \ket{V}_1) 
    (\gamma \ket{u}_1 + \delta \ket{d}_1) \nonumber \\
    &- \frac{1}{4} \ket{\Phi^-}_{24} \ket{\psi^+}_{35}(\alpha \ket{H}_1 
    + \beta \ket{V}_1) 
    (\delta \ket{u}_1 - \gamma \ket{d}_1) 
    + \frac{1}{4} \ket{\Phi^-}_{24} \ket{\psi^-}_{35}(\alpha \ket{H}_1 
    + \beta \ket{V}_1) 
    (\delta \ket{u}_1 + \gamma \ket{d}_1) \nonumber \\
    &+ \frac{1}{4} \ket{\Psi^+}_{24} \ket{\phi^+}_{35}(\beta \ket{H}_1 
    - \alpha \ket{V}_1) 
    (\gamma \ket{u}_1 - \delta \ket{d}_1) 
    - \frac{1}{4} \ket{\Psi^+}_{24} \ket{\phi^-}_{35}(\beta \ket{H}_1 
    - \alpha \ket{V}_1) 
    (\gamma \ket{u}_1 + \delta \ket{d}_1) \nonumber \\
    &+ \frac{1}{4} \ket{\Psi^+}_{24} \ket{\psi^+}_{35}(\beta \ket{H}_1 
    - \alpha \ket{V}_1)
    (\delta \ket{u}_1 - \gamma \ket{d}_1) 
    - \frac{1}{4} \ket{\Psi^+}_{24} \ket{\psi^-}_{35}(\beta \ket{H}_1 
    - \alpha \ket{V}_1) 
    (\delta \ket{u}_1 + \gamma \ket{d}_1) \nonumber \\
    &- \frac{1}{4} \ket{\Psi^-}_{24} \ket{\phi^+}_{35}(\beta \ket{H}_1 
    + \alpha \ket{V}_1) 
    (\gamma \ket{u}_1 - \delta \ket{d}_1) 
    + \frac{1}{4} \ket{\Psi^-}_{24} \ket{\phi^-}_{35}(\beta \ket{H}_1 
    + \alpha \ket{V}_1) 
    (\gamma \ket{u}_1 + \delta \ket{d}_1) \nonumber \\
    &- \frac{1}{4} \ket{\Psi^-}_{24} \ket{\psi^+}_{35}(\beta \ket{H}_1 
    + \alpha \ket{V}_1) 
    (\delta \ket{u}_1 - \gamma \ket{d}_1) 
    + \frac{1}{4} \ket{\Psi^-}_{24} \ket{\psi^-}_{35}(\beta \ket{H}_1 
    + \alpha \ket{V}_1) 
    (\delta \ket{u}_1 + \gamma \ket{d}_1).
  \end{align}
\end{widetext}

Then, to obtain the state-joining one needs to perform a Bell measurement in polarization on the photons 2 and 4 and another Bell measurement in the modes $u$ and $d$ on the photons 3 and 5. Whatever the outcome of the two Bell measurements, one may carry out an appropriate unitary transformation in order to cast photon 1 in the desired ``joined'' state 
$$\ket{\Psi}_1 = (\alpha \ket{H}_1 + \beta \ket{V}_1) \otimes (\delta \ket{u}_1 + \gamma \ket{d}_1).$$ The unitary transformation must be selected according to the result of the two Bell measurements, out of 16 possible results (and if a different TPES state is used in the process, it affects only the set of unitary transformations to be used).

It is interesting to note that, if a method for deterministic complete Bell state measurement is available, the quantum state joining obtained by this teleportation method can be also accomplished in a deterministic way. Indeed, one needs to prepare in advance a TPES using the probabilistic joining protocol by making as many attempts as needed. Then, one can complete the joining of the input photons deterministically by using the above described teleportation protocol.
  
Following the same ideas, one may use the state joining protocol to create even more complex entanglement clusters of photons, exploiting multiple degrees of freedom per photon. In particular, we notice that the TPES introduced above belong to the family of ``linked'' states first proposed by Yoran and Reznik in order to perform deterministic quantum computation with linear optics \cite{yoran03}. Not surprisingly, the optical scheme proposed in Ref.\ \cite{yoran03} to create such linked states is also very similar to that used for quantum-state joining (but it did not include explicitly the KLM gate implementations).

\section{Conclusion}
In summary, we have revisited the quantum-state joining and splitting processes recently introduced in Ref.\ \cite{vitelli13}. After casting the associated formalism in the more general photon-number notation, we have introduced some modified schemes that do not require feed-forward or post-selection. Next, we have provided a formal proof that the quantum joining of two photon states with linear optics cannot be accomplished using only post-selection and requires the use of at least one ancilla photon, despite the existence of linear-optical implementations of CNOT gates which do not require ancillary photons. Finally we have investigated the relationship between the state-joining scheme and the generation of clusters of three-photon entangled states involving more than one qubit per particle.

These schemes for multiplexing the quantum information across photons, despite having a relatively low success probability, may already find application in quantum communication or in interfacing with atomic memories, when high losses are involved. The generation of complex clusters of entangled photons may be of fundamental interest and might open the way to novel quantum protocols. Finally, we notice that if the recent attempts at achieving gigantic nonlinear interactions among photons will succeed \cite{shahmoon11,peyronel12}, deterministic schemes for quantum-state joining and splitting should also become possible, likely making the associated photon multiplexing/demultiplexing an important resource for future quantum communication networks. 

\section*{Acknowledgments}
This work was supported by the 7$^{th}$ Framework Programme of the European Commission, within the Future Emerging Technologies program, under grant No.\ 255914, PHORBITECH. E.P. acknowledges financial support by the Spanish project FIS2010-14830 and Generalitat de Catalunya, and by the European PERCENT ERC Starting Grant. 

\bibliographystyle{apsrev}

\end{document}